\documentclass[prl,twocolumn,superscriptaddress]{revtex4-1}

\usepackage{amsfonts}
\usepackage{pifont}
\usepackage{bbding}
\usepackage{mathrsfs}
\usepackage{}
\usepackage{textcomp}
\usepackage{latexsym}
\usepackage{amsthm}
\usepackage{amssymb}
\usepackage{amsmath}
\usepackage[all]{xy}
\usepackage{graphicx}
\usepackage{dcolumn}
\usepackage{bm}
\usepackage{color}
\usepackage[T1]{fontenc}
\usepackage[sc]{mathpazo}
\usepackage{amsmath}
\usepackage{amssymb}
\usepackage{enumerate}
\usepackage{amsthm}

\newtheorem{thm}{Theorem}

\usepackage{mathdots}

\def\be{\begin{equation}}
\def\ee{\end{equation}}
\def\bea{\begin{eqnarray*}}
\def\eea{\end{eqnarray*}}

\def\be{\begin{equation}}
\def\ee{\end{equation}}
\def\bea{\begin{equation*}}
\def\eea{\end{equation*}}
\def\bna{\begin{eqnarray*}}
\def\ena{\end{eqnarray*}}
\def\bn{\begin{eqnarray}}
\def\en{\end{eqnarray}}
\def\bpm{\begin{pmatrix}}
\def\epm{\end{pmatrix}}

\newcommand{\bra}[1]{\langle#1|}

\newcommand{\ket}[1]{|#1\rangle}

\newcommand{\braket}[1]{\langle#1\rangle}

\SelectTips{eu}{11}

\begin{document}

\title{Simulating broken $\cal PT$-symmetric Hamiltonian systems by weak measurement}

 \author{Minyi Huang}
 \email{11335001@zju.edu.cn}
 \affiliation{School of Mathematical Sciences, Zhejiang University, Hangzhou 310027, China}

 \author{Ray-Kuang Lee}
\affiliation{Institute of Photonics Technologies, National Tsing Hua University, Hsinchu 30013, Taiwan}
\affiliation{Center for Quantum Technology, Hsinchu 30013, Taiwan}

\author{Lijian Zhang}
\affiliation{College of Engineering and Applied Sciences, Nanjing University, Nanjing 210093, China}

\author{Shao-Ming Fei}
\affiliation{School of Mathematical Sciences, Capital Normal University, Beijing 100048, China}
\affiliation{Max-Planck-Institute for Mathematics in the Sciences, 04103 Leipzig, Germany}

 \author{Junde Wu}
\affiliation{School of Mathematical Sciences, Zhejiang University, Hangzhou 310027, China}

\begin{abstract}
By embedding a $\cal PT$-symmetric (pseudo-Hermitian) system into a large Hermitian one, we disclose the relations between $\cal{PT}$-symmetric quantum theory and weak measurement theory.
We show that the weak measurement can give rise to the inner product structure of $\cal PT$-symmetric systems, with the pre-selected state and its post-selected state resident in the dilated conventional system. Typically in quantum information theory, by projecting out the irrelevant degrees and projecting onto the subspace, even local broken $\cal PT$-symmetric Hamiltonian systems can be effectively simulated by this weak measurement paradigm.
\end{abstract}

\maketitle

\noindent {\it Introduction}~~
Generalizing the conventional Hermitian quantum mechanics, Bender and his colleagues established the Parity ($\cal P$)-time ($\cal T$)-symmetric quantum mechanics in 1998~\cite{Bender98}.
With the additional degree of freedom from a non-conservative Hamiltonian, as well as the existence of exceptional points between unbroken and broken $\cal PT$-symmetries,
optical $\cal PT$-symmetric devices have been demonstrated with many useful applications~\cite{El-OL, Makris-PRL, Guo-PRL, Ruter-NP, Chang-NP, Tang-NP}.
Although calling for more caution on physical interpretations, especially on the consistency problem of local $\mathcal{PT}$-symmetric operation and the no-signaling principle~\cite{Lee14},
$\cal PT$-symmetric quantum mechanics has been stimulating our understanding on many interesting problems such as
spectral equivalence~\cite{Dorey}, quantum brachistochrone ~\cite{Gunther-PRA} and Riemann hypothesis~\cite{Bender-17}.

Compared with the Dirac inner product in conventional quantum mechanics, $\cal PT$-symmetric quantum theory can be well manifested by the $\eta$-inner product~\cite{Mostafazadeh, Mostafazadeh-Geom}.
In the broken $\cal PT$-symmetry case, the $\eta$-inner product of a state with itself can be negative, which makes the broken $\cal PT$-symmetric quantum systems a complete departure from conventional quantum mechanics.
While in the unbroken $\cal PT$-symmetry case, the $\eta$-inner product presents a completely analogous physical interpretation to the Dirac inner product, giving rise to
many similar properties between $\cal PT$-symmetric and conventional quantum mechanics.
Recent works also show that the $\eta$-inner product is tightly related to the properties of superposition and coherence in conventional quantum mechanics~\cite{JPA}.

Despite the original motivation to build a new framework of quantum theory, researchers are aware of the importance of simulating $\cal PT$-symmetric systems with conventional quantum mechanics. It will help explore the properties and physical meaning of $\cal PT$-symmetric quantum systems.
On this issue, one should answer the question in what sense a quantum system can be viewed as $\cal PT$-symmetric.
One approach, initialized by G{\"u}nther and Samsonov, is to embed
unbroken $\cal PT$-symmetric Hamiltonians into higher dimensional Hermitian Hamiltonians~\cite{Gunther-PRL,Huang,Ueda}.
 By dilating the system to a large Hermitian one and projecting out the ancillary system, this paradigm successfully simulates the evolution of unbroken $\cal PT$-symmetric Hamiltonians. Such a way, inspired by Naimark dilation and typical ideas in quantum simulation, endows direct physical meaning of $\cal PT$-symmetric
 quantum systems in the sense of open systems.
However, the simulation of broken $\cal PT$-symmetric systems is still in suspense, due to its essential distinctions
with conventional quantum systems.

In this Letter, we illustrate the simulation for broken $\cal PT$-symmetric systems based on weak measurement~\cite{AAV}.
For a system weakly coupled to the apparatus, the pointer state will be shifted by the weak value when a weak measurement is performed.
The weak value, tightly related to the non-classical features of quantum mechanics, such as the Hardy's paradox~\cite{Aharonov}, three box paradox~\cite{Resch}  and Leggett-Garg inequalities~\cite{Palacios},
can take values beyond the expected values of an observable, and even be a complex number.
The weak measurement theory has provided new ways to measure geometric phases~\cite{Hosten, Sjogvist, Kobayashi,Lijian} and non-Hermitian systems~\cite{Pati,Vaidman}, as well as
to amplify signals as a sensitive estimation of small evolution parameters~\cite{Lundeen, Starling-2010b, Brunner}.
Our aim is to propose a concrete scenario in which the quantum system can be viewed as $\cal PT$-symmetric by utilizing the weak measurement.
Our result reveals the connections between $\cal PT$-symmetry and the weak measurement theory, providing the important missing point for the simulation 
of broken $\cal {PT}$-symmetric quantum systems.


\noindent {\it Generalized embedding of $\cal PT$-symmetric systems}~~
Consider $n$-dimensional discrete quantum systems. A linear operator $P$ is said to be a parity operator if ${P}^2=I$, where $I$ denotes the $n\times n$ identity matrix.
An anti-linear operator $T$ is said to be a time reversal operator if $T\overline{T}=I$ and $PT=T\overline{P}$,
where $\overline{T}$ ($\overline{P}$) stands for the complex conjugation of $T$ ($P$). A Hamiltonian $H$ is said to be $PT$-symmetric if $HPT=PT\overline{H}$~\cite{note}.
$H$ is called unbroken $\cal PT$-symmetric if it is diagonalizable and all of its eigenvalues are real.
Otherwise, $H$ is called broken $\cal PT$-symmetric.

In quantum mechanics, a Hamiltonian $H$ gives rise to a unitary evolution of the system.
Let $\phi_1$ and $\phi_2$ be two states.
On can introduce a Hermitian operator $\eta$ to define the $\eta$-inner
product by $\braket{\phi_1|\phi_2}_{\eta}=\braket{\phi_1|\eta|\phi_2}$.
With respect to the $\eta$-inner product, $H$ presents a unitary evolution if and only if
${H}^\dag\eta=\eta H$~\cite{Mostafazadeh, Mostafazadeh-Geom, Deng, Mannheim, Horn}, where ${H}^\dag$ denotes the conjugation and transpose of $H$.
Here, $\eta$ is said to be the metric operator of $H$.
Moreover, for a generic $\cal PT$-symmetric operator $H$ and its metric operator $\eta$, there always exist some matrix $\Psi'$ such that
$\Psi'^{-1}H\Psi'=J$ and $\Psi'^\dag\eta\Psi'=S$, where
\begin{equation}\label{cano2}
J=diag(
J_{n_1}(\lambda_1,\overline{\lambda}_1),...,J_{n_p}(\lambda_p,\overline{\lambda}_p),J_{n_{p+1}}(\lambda_{p+1}),...,J_r(\lambda_r)),
\end{equation}
$J_{n_k}(\lambda_k,\overline{\lambda}_k)=
\bpm\begin{smallmatrix} J_{n_k}(\lambda_k)&0\\0&J_{n_k}(\overline{\lambda_k})\end{smallmatrix}\epm$,
$J_{n_j}(\lambda_j)$ are the Jordan blocks, $\lambda_1,
\cdots, \lambda_p$ are complex numbers and $\lambda_{p+1},
\cdots, \lambda_r$ are real numbers,
\begin{equation}
S=
diag(S_{2n_1},...,S_{2n_p},\epsilon_{n_q} S_{n_q},...,\epsilon_{n_r} S_{n_r}),\label{cano2'}
\end{equation}
$n_i$ denote the orders of Jordan blocks in Eq. (\ref{cano2}), i.e.,  $S_{k}=\begin{pmatrix}\begin{smallmatrix}&&1\\&\iddots&\\1&&\end{smallmatrix}\end{pmatrix}_{k\times k}$
and $\epsilon_i=\pm 1$ is uniquely determined by $\eta$ \cite{Huang,Gohberg}.
For convenience, we only consider the situations in which $\epsilon_i=1$.
In this case, $S$ is a permutation matrix and $S^2=I$. Note that $S$ can be equal to $I$ if and only if $H$ is unbroken $\cal PT$-symmetric~\cite{Huang}. Henceforth we always
assume $S=I$ in the unbroken case. The following theorem gives an important property of $\cal PT$-symmetric Hamiltonians.

\begin{thm}\label{thm1}
Let $H$ be an $n\times n$ $\cal PT$-symmetric matrix and $\eta$ be the metric matrix of
$H$. Let $J$ and $S$ be matrices in Eqs (\ref{cano2}) and (\ref{cano2'}).
Then, there exist $n\times n$
invertible matrices $\Psi$, $\Xi$, $\Sigma$ and a $2n\times 2n$ Hermitian matrix $\tilde{H}$ such that for
$\tilde{\Psi}=\bpm\Psi\\ \Xi\epm$ and $\tilde{\Phi}=\bpm\Psi\\ \Sigma\epm$, the following equations hold,
\begin{eqnarray}
\tilde{\Phi}^\dag\tilde{\Psi}=S,~~~\tilde{\Phi}^\dag\tilde{H}\tilde{\Psi}=SJ.\label{23}
\end{eqnarray}
\end{thm}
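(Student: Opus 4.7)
The plan is to construct $\Psi,\Xi,\Sigma$ explicitly from the canonical data and to obtain $\tilde{H}$ by a real-linear dimension count, rather than by a block-by-block ansatz. The key preliminary observation is that conjugating the pseudo-Hermiticity relation $H^\dagger\eta=\eta H$ by $\Psi'$ yields $J^\dagger S=SJ$, so $SJ$ is already Hermitian; this compatibility is what makes a Hermitian dilation possible in the first place.

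Next I would take $\Psi=\alpha\Psi'$ with a scalar $\alpha>0$ small enough that $M:=S-\Psi^\dagger\Psi$ is invertible (possible by continuity, since $M\to S$ as $\alpha\to 0$). Setting $\Xi=I_n$ and $\Sigma=M$ (Hermitian by construction), both are invertible, and a one-line computation gives $\tilde{\Phi}^\dagger\tilde{\Psi}=\Psi^\dagger\Psi+\Sigma^\dagger\Xi=\Psi^\dagger\Psi+M=S$, establishing the first identity of (\ref{23}). For the second identity I would argue existence of $\tilde{H}$ abstractly by showing that the real-linear map $\mathcal{L}\colon\mathrm{Herm}(2n)\to M_n(\mathbb{C})$ given by $\mathcal{L}(\tilde{H})=\tilde{\Phi}^\dagger\tilde{H}\tilde{\Psi}$ is surjective. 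With respect to the real trace pairing, its adjoint evaluates to $\mathcal{L}^{\ast}(Z)=\tfrac12(\tilde{\Psi}Z^\dagger\tilde{\Phi}^\dagger+\tilde{\Phi}Z\tilde{\Psi}^\dagger)$; the kernel is trivial whenever the column spaces of $\tilde{\Psi}$ and $\tilde{\Phi}$ intersect only in zero, since the two summands then live in distinct subspaces of $\mathbb{C}^{2n}$ and must each vanish, forcing $Z=0$. Surjectivity of $\mathcal{L}$ then delivers a Hermitian $\tilde{H}$ with $\mathcal{L}(\tilde{H})=SJ$.

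The main obstacle is precisely this general-position claim, on which the entire existence argument hinges. I would reduce it to invertibility of $\Xi-\Sigma=I-S+\Psi^\dagger\Psi$ (via a Schur-complement check on $(\tilde{\Psi}\,|\,\tilde{\Phi})$), which is immediate since $I-S$ is positive semi-definite (eigenvalues $0$ and $2$) and $\Psi^\dagger\Psi$ is positive definite. Once this is secured, the dimension count --- $4n^2$ real parameters in $\mathrm{Herm}(2n)$ versus $2n^2$ real constraints in $M_n(\mathbb{C})$, with an injective adjoint --- makes the conclusion essentially automatic. A more constructive ansatz $\tilde{H}=\bpm A & B\\ B^\dagger & D\epm$ also works but yields less transparent identities for the blocks.
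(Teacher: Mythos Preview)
Your argument is correct but diverges from the paper's in how $\tilde{H}$ is produced. The first half is essentially the same: both set $\Psi$ equal to a scalar multiple of $\Psi'$ chosen so that $S-\Psi^\dag\Psi$ is invertible (the paper takes the scalar large, so that $\Psi^\dag\Psi>I\geqslant S$; you take it small, so that $S-\Psi^\dag\Psi$ is close to the invertible $S$), and your choice $\Xi=I$, $\Sigma=S-\Psi^\dag\Psi$ is precisely the $\Xi=I$ instance of the paper's formula $\Sigma=(\Xi^{-1})^\dag(S-\Psi^\dag\Psi)$. The difference is in the second identity: the paper writes down the blocks of $\tilde H$ explicitly, namely $H_1=\eta H$, $H_2=(\Psi^\dag)^{-1}\Xi^\dag$, $H_4=-I-\Xi(S-\Psi^\dag\Psi)^{-1}\Xi^\dag$, and verifies Hermiticity and Eq.~(\ref{23}) by direct substitution, whereas you argue non-constructively by showing that $\mathcal L\colon\mathrm{Herm}(2n)\to M_n(\mathbb C)$ is surjective via injectivity of $\mathcal L^\ast$ and the column-space disjointness of $\tilde\Psi,\tilde\Phi$. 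Your route is more structural and makes the $4n^2$-versus-$2n^2$ degree-of-freedom count transparent; the paper's explicit blocks, however, are exactly what feed into the worked two-level example in the Supplemental Material, so the constructive version has downstream payoff. One small remark: your opening observation that $SJ$ is Hermitian, while correct, is not actually used in your proof, since you establish surjectivity of $\mathcal L$ onto all of $M_n(\mathbb C)$ rather than onto its Hermitian part.
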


\begin{proof}
As was discussed, there exist a matrix $\Psi'$ such that
$\Psi'^{-1}H\Psi'=J$ and $\Psi'^\dag\eta\Psi'=S$~\cite{Huang,Gohberg}.
Since $\Psi'^\dag\Psi'>0$, there always exits a positive number $c$ such that $c^2\Psi'^\dag \Psi' >I$. Set $\Psi=c\Psi'$.
Since $\Psi^\dag\Psi>I\geqslant S$, $\Psi^\dag\Psi-S$ is invertible.

Let $\Xi$ be an $n\times n$ invertible matrix. Taking $\Sigma=(\Xi^{-1})^\dag(S-\Psi^\dag\Psi)$,
$\eta=(\Psi^{-1})^\dag S\Psi^{-1}$, $H_1=\eta H$, $H_2=(\Psi^\dag)^{-1}(\Xi)^\dag$ and $H_4=-H_2^\dag\Psi\Xi^{-1}-(\Sigma^\dag)^{-1}\Psi^\dag H_2$,
one can directly verify that $\tilde{H}=\bpm H_1&H_2\\H_2^\dag&H_4\epm$ is Hermitian and Eq. (\ref{23}) holds.
\end{proof}

Theorem \ref{thm1} actually gives out the inner product structure of $H$ in a subspace.
Note that the matrix $\Psi$ in Theorem \ref{thm1} can be written as $\Psi=(\ket{\psi_1},\cdots,\ket{\psi_n})$, where the column vectors $\{\ket{\psi_i}\}$ form a linear basis of $\mathbb C^n$.
Similarly, $\Xi=(\ket{\xi_1},\cdots,\ket{\xi_n})$ and $\Sigma=(\ket{\sigma_1},\cdots,\ket{\sigma_n})$.
Correspondingly we have $\tilde{\Psi}=(\ket{\tilde{\psi}_1},\cdots,\ket{\tilde{\psi}_n})$ and $\tilde{\Phi}=(\ket{\tilde{\phi}_1},\cdots,\ket{\tilde{\phi}_n})$, where
$\ket{\tilde{\psi}_i}=\bpm\ket{\psi_i}\\ \ket{\xi_i}\epm$ and $\ket{\tilde{\phi}_i}=\bpm\ket{\psi_i}\\ \ket{\sigma_i}\epm$.
Moreover, $\tilde{\Phi} S=(\ket{\tilde{\mu}_1},\cdots,\ket{\tilde{\mu}_n})=(\ket{\tilde{\phi}_{s(1)}},\cdots,\ket{\tilde{\phi}_{s(n)}})$, where
$S$ is the permutation matrix in Theorem \ref{thm1}, and
$s$ is the permutation induced by $S$.
Similarly, we can write $\Psi S=(\ket{\mu_1},\cdots,\ket{\mu_n})$, where $\ket{\mu_i}=\ket{\psi_{s(i)}}$.
From the definition of $\ket{\tilde{\mu}_i}$, we have $\braket{\tilde{\mu}_i|\tilde{\psi}_j}=(S\tilde{\Phi}^\dag\tilde{\Psi})_{ij}$ and
$\braket{\tilde{\mu}_i|\tilde{H}|\tilde{\psi}_j}=(S\tilde{\Phi}^\dag\tilde{H}\tilde{\Psi})_{ij}$. According to Eq. (\ref{23}), we have
\begin{eqnarray}\label{ijiHj}
\braket{\tilde{\mu}_i|\tilde{\psi}_j}=\delta_{i,j},~~~\braket{\tilde{\mu}_i|\tilde{H}|\tilde{\psi}_j}=J_{i,j},
\end{eqnarray}
where $J_{i,j}$ is the $(i,j)$-th entry of $J$.

On the other hand, note that the metric matrix  $\eta$ of $H$ is $(\Psi^\dag)^{-1}S\Psi^{-1}$.
Thus we have the following relations between the Dirac and $\eta$-inner products 
\begin{eqnarray}
&&\braket{\tilde{\mu}_i|\tilde{\psi}_j}=\braket{\mu_i|\psi_j}_{\eta},\label{ij'}\\
&&\braket{\tilde{\mu}_i|\tilde{H}|\tilde{\psi}_j}=\braket{\mu_i|H|\psi_j}_{\eta},\label{iHj'}
\end{eqnarray}
where $\braket{\mu_i|H|\psi_j}_{\eta}=\braket{\mu_i|\eta H|\psi_j}$.
The results show that there exist two different basis with the same projections onto the subspace of the $\cal PT$-symmetric system, with respect to the $\eta$-inner product.
When confined to the subspace, the Hermitian Hamiltonian $\tilde{H}$ in large space has the same effect as a $\cal PT$-symmetric Hamiltonian $H$, in the sense of this $\eta$-inner product.


\noindent{\it Simulation of $\cal PT$-symmetric Hamiltonian systems}~~
To infer a quantum system is $\cal PT$-symmetric, it is sufficient to identify the Hamiltonian and its inner product structure.
In the weak measurement formalism, one starts by pre-selecting an initial state $\ket{\varphi_i}$.
The target system is coupled to the measurement apparatus, which is in a pointer state $\ket{P}$.
Usually, $\ket{P}=(2\pi\Delta^2)^{-\frac{1}{4}}exp(-\frac{Q^2}{4\Delta^2})$, a Gaussian state with $\Delta$ its standard
deviation. Let $A$ be an observable of the system and $M$ be that of the apparatus, conjugate to $Q$~\cite{AAV}.
The interaction Hamiltonian between the system and apparatus is $H_{int}=f(t)A\otimes M$, with interaction strength $g=\int f(t)dt$.
The state evolves as $\ket{\varphi_i}\otimes\ket{P}\rightarrow e^{-ig A\otimes M}\ket{\varphi_i}\otimes\ket{P}$.
Now if the system satisfies the weak condition that $g /\Delta$ is sufficiently small, then
for a post-selected state $\ket{\varphi_f}$ that $\braket{\varphi_f|\varphi_i}\neq 0$, one has
$\bra{\varphi_f}e^{-ig A\otimes M}\ket{\varphi_i}\ket{P}\approx\braket{\varphi_f|\varphi_i} e^{-ig\braket{A}_wM}\ket{P}
=\braket{\varphi_f|\varphi_i}(2\pi\Delta^2)^{-\frac{1}{4}}exp(-\frac{(Q-g\braket{A}_w)^2}{4\Delta^2})$,
where $\braket{A}_w=\frac{\braket{\varphi_f|A|\varphi_i}}{\braket{\varphi_f|\varphi_i}}$ is called the weak value.
That is, the state is shifted by $g\braket{A}_w$. Thus the weak value $\braket{A}_w$ can be read out experimentally, as a generalization of the eigenvalues in Von Neumann measurement~\cite{Dressel}.

From Eq. (\ref{ijiHj}), we have $\lambda_i=J_{i,i}=\braket{\tilde{\mu}_i|\tilde{H}|\tilde{\psi}_i}=\frac{\braket{\tilde{\mu}_i|\tilde{H}|\tilde{\psi}_i}}{\braket{\tilde{\mu}_i|\tilde{\psi}_i}}$.
Therefore, the eigenvalues of $H$ can be obtained via a weak measurement, by pre-selecting the vector $\ket{\tilde{\psi}_i}$ and post-selecting the vector $\ket{\tilde{\mu}_i}$.
This observation implies that one can use weak measurement to simulate the measurements on a $\cal PT$-symmetric system.

In conventional quantum mechanics, the expectation value of a Hermitian Hamiltonian $H_0=\sum_i \lambda_i\ket{u_i}\bra{u_i}$  with respect to a sate $\ket{\psi_0}=\sum_i d_i\ket{u_i}$ is given by the inner product $\braket{\psi_0|H_0|\psi_0}$. For a $\cal PT$-symmetric Hamiltonian system with the metric matrix $\eta$, the expectation value of a Hamiltonian $H$ with respect to a state $\ket{u}=\sum_i a_i\ket{\psi_i}$
is instead given by $\braket{u|H|u}_\eta$.
Given two vectors $\ket{v}=\sum_i b_i\ket{\mu_i}$ and $\ket{w}=\sum_i c_i\ket{\psi_i}$ of the $\cal PT$-symmetric system.
Let $\ket{\tilde{v}}=\sum_i b_i\ket{\tilde{\mu}_i}$ (unnormalized  for convenience) and $\ket{\tilde{w}}=\sum_i c_i\ket{\tilde{\psi}_i}$ be two vectors in the extended system.
It follows from Eq. (\ref{iHj'}) that $\braket{v|H|w}_\eta=\braket{\tilde{v}|\tilde{H}|\tilde{w}}$.
Assume that $\ket{u}$ satisfies the condition $\braket{u|u}_\eta=\pm 1$.
Now take two states $\ket{\tilde{u}_1}=\sum_i a_{s(i)}\ket{\tilde{\mu}_i}$ and $\ket{\tilde{u}_2}=\sum_i a_i\ket{\tilde{\psi}_i}$, whose projections to the
$\cal PT$-symmetric subspace are both $\ket{u}$. Then we have
\begin{equation}
\frac{\braket{u|H|u}_\eta}{\braket{u|u}_\eta}=\frac{\braket{\tilde{u}_1|\tilde{H}|\tilde{u}_2}}{\braket{\tilde{u}_1|\tilde{u}_2}}.
\end{equation}
Therefore, confined to the $\cal PT$-symmetric subspace, a weak measurement can completely describe the expectations of $H$.

In conventional quantum mechanics, when an eigenvalue is detected, the measured state collapses to the corresponding eigenstate.
However, the problem in $\cal PT$-symmetric system is subtle.
According to Eq. (\ref{ij'}), $\braket{\psi_i|\psi_i}_\eta\neq 0$ only if $i=s(i)$.
This observation makes it reasonable to assume that for any vector
$\ket{u}=\sum_i a_i\ket{\psi_i}$ satisfying $\braket{u|u}_\eta\neq 0$, if $a_i\neq 0$, then $a_{s(i)}\neq 0$.
That is, if $\braket{u|u}_\eta\neq 0$, its vector components of $\ket{\psi_i}$ and $\ket{\psi_{s(i)}}$ take zero or nonzero values simultaneously,
while the eigenvalues associated with $\psi_i$ and $\psi_{s(i)}$ are either equal or complex conjugations.
In this case, one can generalize the detection of an eigenvalue of $\lambda_i$ in conventional quantum mechanics to the following.
For $\ket{u}=\sum_i a_i\ket{\psi_i}$, if the value of
$$
\frac{a_i\overline{a_{s(i)}}\lambda_i+\overline{a_i}a_{s(i)}\overline{\lambda}_i}{a_i\overline{a_{s(i)}}+\overline{a_i}a_{s(i)}}
$$
is detected \cite{note 3}, the state $\ket{u}$ will collapse to
$$
\frac{a_i\ket{\psi_i}+a_{s(i)}\ket{\psi_{s(i)}}}{|a_i\overline{a_{s(i)}}+a_{s(i)}\overline{a_i}|^{1\over 2}}.
$$
Apparently, when $i=s(i)$, the state $\ket{u}$ will collapse to $\ket{\psi_i}$, similar to the case of conventional quantum mechanics.
Note that $i=s(i)$ only if the system is unbroken $\cal PT$-symmetric, for which
it is analogous to conventional quantum mechanics and such an analogy in state collapse is not unexpected.

By pre- and post-selecting the states, we see that the weak measurements can successfully simulate an arbitrary $\eta$-inner product.
Furthermore, when confined to the subspace, the measurement results actually extract the same information as a $\cal PT$-symmetric Hamiltonian system.
Such information help us eventually infer that the subsystem is $\cal PT$-symmetric.


\noindent{\it Discussions and conclusion}~~We further discuss the mechanism and physical implications related to the weak measurement paradigm, by comparing it with the embedding paradigm~\cite{Gunther-PRL,Huang}.
The essence of the embedding paradigm is to realize the evolution of a $\cal PT$-symmetric Hamiltonian, by evolving the state under the Hermitian Hamiltonian in the large space and then project it to the subspace.
The key to this paradigm can be mathematically described as follows~\cite{Huang}:
For a given $n\times n$ unbroken $\cal PT$-symmetric Hamiltonian $H$, find a $2n\times 2n$ Hermitian matrix $\tilde{H}$, $n\times n$
invertible matrices $\Psi$, $\Xi$ so that $\tilde{\Psi}^\dag\tilde{\Psi}=I$ and the following equations
\begin{eqnarray}\label{3''4''}
e^{-it\tilde{H}}\tilde{\Psi}=\tilde{\Psi}e^{-itJ},~~~e^{-itH}\Psi=\Psi e^{-itJ}
\end{eqnarray}
hold, where $\tilde{\Psi}=\bpm \Psi\\ \Xi \epm$.
The equations are actually equivalent to the following conditions~\cite{note 1}:
\begin{eqnarray}\label{2'3'4'}
\tilde{\Psi}^\dag\tilde{\Psi}=I,~~\tilde{H}\tilde{\Psi}=\tilde{\Psi}J,~~H\Psi=\Psi J.
\end{eqnarray}
Equation (\ref{3''4''}) ensures that the unitary evolution $\tilde{U}(t)$ gives the evolution $U(t)$ of a $\cal PT$-symmetric Hamiltonian $H$ in a subspace.
 In this sense, the embedding paradigm gives a natural way of simulation.
Nevertheless, in the broken $\cal PT$-symmetric case, the solutions do not exist~\cite{Huang}.
In fact, Eq. (\ref{23}) is mathematically a generalization of Eq. (\ref{2'3'4'}) \cite{note 2}.
Like the case of the embedding paradigm, it is natural to further require that $\tilde{\Phi}^\dag e^{-it\tilde{H}}\tilde{\Psi}=Se^{-itJ}$,
so that $e^{-it\tilde{H}}$ gives the same effect as $e^{-itH}$ in the subspace.
However, such a requirement cannot be satisfied for broken $\cal PT$-symmetry, which is obvious from the unboundedness of $Se^{-itJ}$.

However, consider sufficiently small time $t\in [0,\epsilon]$. We have
$\ket{\tilde{u}(t)}=e^{-it\tilde{H}}\ket{\tilde{u}}\approx(I-it\tilde{H})\ket{\tilde{u}}$. On the other hand,
$\ket{u(t)}=e^{-itH}\ket{u}\approx(I-itH)\ket{u}$.
Now equations Eqs. (\ref{ij'}) and (\ref{iHj'}) insure that when confined to the subspace,
$\ket{\tilde{u}(t)}$ is equivalent to $\ket{u(t)}$ in the sense of $\eta$-inner product \textcolor[rgb]{1.00,0.00,0.00}{(see Supplemental Material for an example)}.
This observation implies that $\cal PT$-symmetric quantum systems can be well approximated
in a sufficiently small time evolution,
by choosing two different sets of basis $\{\ket{\tilde{\phi}_i}\}$  and $\{\ket{\tilde{\psi}_i}\}$ with the same components in the subspace, which
can be realized by weak measurement. Here instead of the small time interval,
the weak condition that $g/\Delta$ is sufficiently small
ensures the approximation.
The weak measurement paradigm can be viewed as a generalization of the embedding paradigm, due to the fact that Eq. (\ref{2'3'4'}) is a special case of
Eq. (\ref{23}) in the $\cal PT$-symmetric unbroken case.
Hence, the Hamiltonian $\tilde{H}$ in the embedding paradigm can also be utilized in the weak measurement approach,
although the embedding paradigm itself does not work.
Comparing our approach with that in~\cite{Pati}, where one obtains the expected value of a Hamiltonian in the Dirac inner product
by using the polar decomposition, our method lays emphasis on the properties of a $\cal PT$-symmetric Hamiltonian with respect to the $\eta$-inner product.

In summary, we have proposed a weak measurement paradigm to investigate the behaviors of broken $\cal PT$-symmetric Hamiltonian systems.
By embedding the $\cal PT$-symmetric system into a large Hermitian system and utilizing weak measurements, we have shown how a $\cal PT$-symmetric Hamiltonian can be simulated.
Our paradigm may shine new light on the study of $\cal {PT}$-symmetric quantum mechanics and its physical implications and applications.

\begin{acknowledgments}
This work is supported by National Natural Science Foundation of China (11171301, 11571307, 11690032, 61490711, 61877054 and 11675113), National Key R\&D Program of China under Grant No. 2018 YFA0306202 and the NSF of Beijing under Grant No. KZ201810028042.
\end{acknowledgments}


\newpage

\section{Supplemental Material:}
\subsection{An example}
To illustrate the validity of our theoretic results, an example is given based on the two dimensional model proposed by Bender et al.~\cite{Bender07}:
$$
H=\begin{bmatrix}
re^{i\theta} &s\\
s&re^{-i\theta}
\end{bmatrix},~~~~~~P=\begin{bmatrix}
0 & 1\\
1 & 0
\end{bmatrix},
~~~~~~T=\begin{bmatrix}
1 & 0\\
0 & 1
\end{bmatrix}.
$$
Here, as $HPT=PT\overline{H}$, the Hamiltonian $H$ is $\cal PT$-symmetric.~In particular, when $\Delta=s^2-r^2\sin^2\theta< 0$, $H$ is broken $\cal PT$-symmetric. The corresponding eigenvalues and eigenvectors (without
 normalization)  are:
$$\lambda_1=r\cos\theta+i\sqrt{-\Delta},~~~\lambda_2=r\cos\theta-i\sqrt{-\Delta}.$$
$$\psi_1=\begin{bmatrix}i(\sqrt{-\Delta}+r\sin\theta)\\s \end{bmatrix}, \psi_2=\begin{bmatrix}-s \\ i(\sqrt{-\Delta}+r\sin\theta)\end{bmatrix}.$$

Then, by denoting the eigenvectors in the matrix form, we have:
\begin{eqnarray}
\Psi=[\psi_1,\psi_2]=\begin{bmatrix}i(\sqrt{-\Delta}+r\sin\theta) & -s\\s &i(\sqrt{-\Delta}+r\sin\theta)\end{bmatrix},
\end{eqnarray}
It can be verified that $\Psi^{-1}H\Psi=J$ and $\Psi^\dag\eta\Psi=S$, where
\begin{eqnarray}
J=\begin{bmatrix}r\cos\theta+i\sqrt{-\Delta}&0\\0& r\cos\theta-i\sqrt{-\Delta} \end{bmatrix},~~~S=\begin{bmatrix}
0 & 1\\
1 & 0
\end{bmatrix}.
\end{eqnarray}


Now, with the short-handed notations,
$$u=\sqrt{-\Delta}+r\sin\theta,~a=2r\sin\theta.$$
%
we have
\begin{eqnarray}
&&\Psi=\begin{bmatrix}iu & -s\\s &iu \end{bmatrix},~\Psi^{-1}=\frac{1}{s^2-u^2}\begin{bmatrix}iu & s\\-s &iu \end{bmatrix},\\
&&S-\Psi^\dag\Psi=\begin{bmatrix}-au&1-2sui\\1+2sui&-au\end{bmatrix},\\
&&det(S-\Psi^\dag\Psi)=-4\Delta u^2-1.
\end{eqnarray}
For simplicity, we also assume $-4\Delta u^2-1\neq 0$. Otherwise, as showed in the proof of Theorem $1$, we can take a constant value $c$ such that $S-c^2\Psi^\dag\Psi$ is invertible,  i.e., with $c\Psi$ instead of $\Psi$. Now
$$(S-\Psi^\dag\Psi)^{-1}=\frac{1}{-4\Delta u^2-1}\begin{bmatrix}-au & -1+2sui \\-1-2sui &-au\end{bmatrix}.$$

To introduce our simulating scenario, we take $\Psi=\Xi$ for convenience, as $\Xi$ is arbitrary.
By using the construction in Theorem $1$, one can have $\Sigma=(\Xi^{-1})^\dag(S-\Psi^\dag\Psi)$,
$\eta=(\Psi^{-1})^\dag S\Psi^{-1}$, $H_1=\eta H$, $H_2=(\Psi^\dag)^{-1}(\Xi)^\dag$ and $H_4=-H_2^\dag\Psi\Xi^{-1}-(\Sigma^\dag)^{-1}\Psi^\dag H_2
=-I-\Xi(S-\Psi^\dag\Psi)^{-1}\Xi^\dag$.

Then, direct calculations give us
\begin{eqnarray}
&&\tilde{H}=\begin{bmatrix}A_1&A_2&1&0\\A_3&A_4&0&1 \\ 1&0& -1-KB_1&-KB_2\\ 0&1&-KB_3&-1-KB_4\end{bmatrix},\\
&&\tilde{\Psi}=\begin{bmatrix}iu&-s\\s&iu\\iu&-s\\s&iu\end{bmatrix},\\
&&\tilde{\Phi}^\dag=\begin{bmatrix}-iu&s&iu-K_2s &iK_2 u-s\\-s&-iu&iK_2 u+s&iu+K_2s\end{bmatrix},
\end{eqnarray}
with the notations
\begin{eqnarray*}
&&K=\frac{1}{-4\Delta u^2-1},~~K_2=\frac{1}{s^2-u^2},\\
&&A_1=\frac{s}{u^2-s^2},~~~A_2=\frac{re^{-i\theta}}{u^2-s^2},\\
&&A_3=\frac{re^{i\theta}}{u^2-s^2},~~~A_4=\frac{s}{u^2-s^2},\\
&&B_1=B_4=-(u^2-s^2)^2,\\
&&B_2=B_3=s^2-u^2.
\end{eqnarray*}

Based on these solutions, it can be easily verified that $\tilde{\Phi}^\dag\tilde{\Psi}=S,\tilde{\Phi}^\dag\tilde{H}\tilde{\Psi}=SJ$, such that
\begin{eqnarray}
\braket{\tilde{\phi}_i,e^{-it\tilde{H}}\tilde{\psi}_j}\approx \tilde{\phi}_i^\dag(I-it\tilde{H})\tilde{\psi}_j
=\psi_i^\dag\eta(I-itH)\psi_j\approx\braket{\psi_i,e^{-itH}\psi_j}_\eta.\nonumber \\
\label{approx}
\end{eqnarray}

Thus, under the $\eta$-inner product, the reduced system resembles a broken $\cal PT$-symmetric one.
In order to illustrate the validity of our simulating paradigm, we introduce four parameters defined below:
\begin{eqnarray}
&&Z_{11}=|\braket{\tilde{\phi}_1,e^{-it\tilde{H}}\tilde{\psi}_1}|,\\
&&Z_{22}=|\braket{\tilde{\phi}_2,e^{-it\tilde{H}}\tilde{\psi}_2}|,\\
&&Z_{12}=|\braket{\tilde{\phi}_1,e^{-it\tilde{H}}\tilde{\psi}_2}-\braket{\psi_1,e^{-itH}\psi_2}_\eta||\braket{\psi_1,e^{-itH}\psi_2}_\eta|^{-1},\nonumber\\
&&\\
&&Z_{21}=|\braket{\tilde{\phi}_2,e^{-it\tilde{H}}\tilde{\psi}_1}-\braket{\psi_2,e^{-itH}\psi_1}_\eta||\braket{\psi_2,e^{-itH}\psi_1}_\eta|^{-1}.\nonumber\\
\end{eqnarray}

The reason $Z_{11}$ an $Z_{22}$ have different forms from $Z_{12}$ and  $Z_{21}$ is that   $\braket{\psi_1,e^{-itH}\psi_1}_\eta=\braket{\psi_2,e^{-itH}\psi_2}_\eta=0$, but
$\braket{\psi_1,e^{-itH}\psi_2}_\eta \neq 0, \quad \braket{\psi_2,e^{-itH}\psi_1}_\eta\neq 0.
$
With the definitions above, apparently, $Z_{ij}$  reflects the difference between $\braket{\tilde{\phi}_i,e^{-it\tilde{H}}\tilde{\psi}_j}$ and $\braket{\psi_i,e^{-itH}\psi_j}_\eta$, as shown in FIG. S1.

\begin{widetext}
\centering
\begin{figure}[ht]
\includegraphics[width=16.0cm]{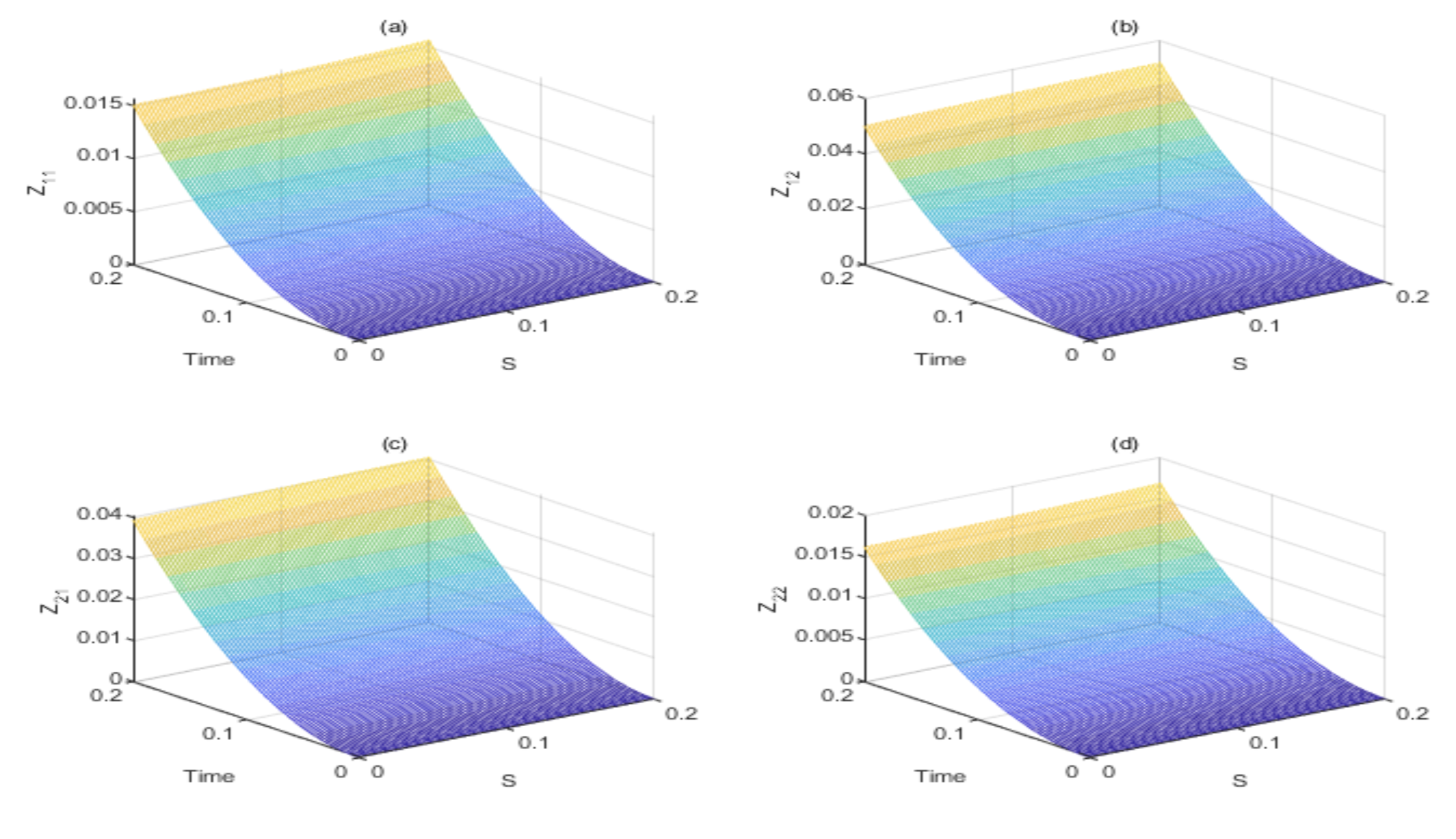}
\caption{Direct calculation on the parameters $Z_{ij}$. The corresponding values of $Z_{ij}$ given in Eqs. (S10-S13) are shown in (a-d), respectively, for the range of $r=\sqrt{2},\theta=\frac{\pi}{4}, t\in[0,0.2],s\in[0,0.2]$.}
\end{figure}
\end{widetext}

In the range of $r=\sqrt{2},\theta=\frac{\pi}{4}, t\in[0,0.2],s\in[0,0.2]$, the differences between $\braket{\tilde{\phi}_i,e^{-it\tilde{H}}\tilde{\psi}_i}$ and $\braket{\psi_i,e^{-itH}\psi_i}_\eta$ are less than $2 \times 10^{-2}$; while  the relative differences between $\braket{\tilde{\phi}_i,e^{-it\tilde{H}}\tilde{\psi}_j}$ and $\braket{\psi_i,e^{-itH}\psi_j}_\eta$ are less than $6 \times 10^{-2}$.

In addition, when $t\rightarrow 0$, $Z_{ij}$ and thus $\braket{\tilde{\phi}_i,e^{-it\tilde{H}}\tilde{\psi}_j}-\braket{\psi_i,e^{-itH}\psi_j}_\eta$, tend to zero. This means that Eq.~(S9) is valid for a sufficiently small time interval $t$, which supports our theoretical conclusion. We want to emphasize that $e^{-it\tilde{H}}$ behaves like a
broken $\cal PT$-symmetric evolution under the $\eta$-inner product, but not under the standard Dirac inner product.
Hence in this case,   the projection of $e^{-it\tilde{H}}\tilde{\psi}_i$ is not expected to be the same as that of  $e^{-itH}\psi_i$.

Moreover, our theorem gives the same results for unbroken $\cal PT$-symmetry.  When $\cal PT$-symmetry is unbroken, then $S=I$, $\eta=(\Psi^{-1})^{\dag}\Psi^{-1}>0$, $J$ is diagonal,  resulting in Eq.~(9)  being just a special case of our Theorem $1$.  Apparently, Eq.~(9) implies that the projection of $e^{-it\tilde{H}}\tilde{\psi}_i$ is numerically equal to $e^{-itH}\psi_i$. Hence the embedding paradigms illustrated in Refs.~\cite{Gunther-PRL,Ueda,Huang} are also included in our method, although in those papers the $\eta$-inner product and measurements are not considered on purpose.

With the help of  the analogy between Dirac inner product and $\eta$-inner product of unbroken $\cal PT$-symmetry, the example illustrated in Ref.~\cite{Gunther-PRL} can be viewed as a proof for  our paradigm in the unbroken $\cal PT$-symmetry. Explicitly, one can verify that Eq.~(3) holds for the construction given below:
\begin{widetext}
\begin{eqnarray*}
&&H=\begin{bmatrix}
E_0+is\sin\theta &s\\
s&E_0-is\sin\theta
\end{bmatrix},~S=\begin{bmatrix}
1 &0\\
0 &1
\end{bmatrix},~
J=\begin{bmatrix}
E_0+s\cos\theta &0\\
0&E_0-s\cos\theta
\end{bmatrix},\\
&&\tilde{H}=
\begin{bmatrix}E_0&s\cos^2\theta&is\cos\theta\sin\theta&0\\s\cos^2\theta&E_0&0&-is\cos\theta\sin\theta
 \\ -is\cos\theta\sin\theta&0& E_0&s\cos^2\theta\\ 0&is\cos\theta\sin\theta&s\cos^2\theta&E_0\end{bmatrix},\\
&&\tilde{\Psi}=\tilde{\Phi}=
\begin{bmatrix}
\frac{e^{\frac{i\theta}{2}}}{2}&\frac{ie^{\frac{-i\theta}{2}}}{2}\\
\frac{e^{-\frac{i\theta}{2}}}{2}&-\frac{ie^{\frac{i\theta}{2}}}{2}\\
\frac{e^{-\frac{i\theta}{2}}}{2}&\frac{ie^{\frac{i\theta}{2}}}{2}\\
\frac{e^{\frac{i\theta}{2}}}{2}&-\frac{ie^{-\frac{i\theta}{2}}}{2}
\end{bmatrix}.
\end{eqnarray*}
\end{widetext}

\end{document}